\Crefname{ALC@unique}{Line}{Lines} 
\numberwithin{theorem}{section}
\newcommand{\TheTitle}{A note on computing range space bases of rational matrices}
\newcommand{\TheAuthors}{A. Varga}
\headers{\TheTitle}{\TheAuthors}
\title{{\TheTitle}}
\author{
  Andreas Varga\thanks{Gilching, Germany
    (\email{varga.andreas@gmail.com}).}
}
\newcommand{\be}{\begin{equation}}
\newcommand{\ee}{\end{equation}}
\newcommand{\ba}{\left [ \begin{array}}
\newcommand{\ea}{\end{array} \right ]}
\newcommand{\bea}{\begin{eqnarray}}
\newcommand{\eea}{\end{eqnarray}}
\newcommand{\rank}{\mathop{\mathrm{rank}}}
\newcommand{\diag}{\mathop{\mathrm{diag}}}
\def\iu{\ensuremath{\mathrm{i}}}
\begin{document}

\maketitle

\begin{abstract}
We discuss computational procedures based on descriptor state-space realizations to compute proper range space bases of rational matrices. The main computation is the orthogonal reduction of the system matrix pencil to a special Kronecker-like form, which allows to extract a full column rank factor, whose columns form a proper rational basis of the range space. The computation of several types of bases can be easily accommodated, such as minimum-degree bases, stable inner minimum-degree bases, etc. Several straightforward applications of the range space basis computation are discussed, such as, the computation of full rank factorizations,  normalized coprime factorizations, pseudo-inverses, and inner-outer factorizations.
\end{abstract}

\begin{keywords}
  rational matrices, full-rank factorizations, computational methods, descriptor systems
\end{keywords}

\begin{AMS}
   	26C15, 93B40, 93C05, 93B55, 93D15
\end{AMS}

\section{Introduction}
For any $p\times m$ real rational matrix $G(\lambda)$ of normal rank $r$, there exists a \emph{full-rank factorization} of $G(\lambda)$ of the form
\be\label{full-rank-fac} G(\lambda) = R(\lambda)X(\lambda) , \ee
where $R(\lambda)$ is a $p\times r$ full column rank rational matrix and $X(\lambda)$ is a $r\times m$ full row rank rational matrix. This factorization generalizes the full-rank factorization of constant matrices, and, similarly to the constant case, it is not unique. Indeed, for any $r\times r$ invertible rational matrix $M(\lambda)$, $G(\lambda) = \widetilde R(\lambda)\widetilde X(\lambda)$, with $\widetilde R(\lambda) = R(\lambda)M^{-1}(\lambda)$ and $\widetilde X = M(\lambda)X(\lambda)$, is also a full-rank factorization of $G(\lambda)$.

The existence of the full-rank factorization (\ref{full-rank-fac}) can be inferred from the Smith-McMillan form of $G(\lambda)$ \cite{Kail80}, which also indicates that both the poles as well as the zeros of $R(\lambda)$ can be arbitrarily chosen. In particular, the zeros of $G(\lambda)$ can be split between the two factors in (\ref{full-rank-fac}), such that $R(\lambda)$ only includes a selected set of zeros, while $X(\lambda)$ includes the rest of zeros. A minimum-degree $R(\lambda)$ corresponds to the complete absence of zeros in $R(\lambda)$.

Using (\ref{full-rank-fac}), it is straightforward to show  that $G(\lambda)$ and $R(\lambda)$ have the same range space over the rational functions, i.e.
\[ \mathcal{R}(G(\lambda)) = \mathcal{R}(R(\lambda)).  \]
For this reason, with a little abuse of language, we will call $R(\lambda)$ the range (or image) matrix of $G(\lambda)$ (or simply the range of $G(\lambda)$). It follows, that for each rational column vector $y(\lambda) \in \mathcal{R}(G(\lambda))$, there exists  $x(\lambda) \in \mathcal{R}(R(\lambda))$ such that $R(\lambda) x(\lambda) = y(\lambda)$. Since $R(\lambda)$ has full column rank $r$, its columns form a set of $r$ basis vectors of $\mathcal{R}(G(\lambda))$.

In this note, we describe a general computational approach based on a descriptor state-space realization of the rational matrix $G(\lambda)$ to determine a full column rank $R(\lambda)$, representing a proper range space basis of $G(\lambda)$. The zeros of $R(\lambda)$ can be enforced to lie in a specified domain of the complex plane $\mathds{C}_b$. The main computation is the orthogonal reduction of the corresponding system matrix pencil to a special Kronecker-like form (already employed in \cite{Oara00} and \cite{Oara05}), which allows to immediately extract a full column rank factor $R(\lambda)$, which includes all zeros of $G(\lambda)$ lying in $\mathds{C}_b$. Straightforward applications of the range computation techniques are mentioned and numerical examples are given.

\section{Range computation}
Let $G(\lambda)$ be a $p\times m$ real rational matrix. We can associate $G(\lambda)$  with the \emph{transfer function matrix} (TFM) of a generalized linear time-invariant system (or descriptor system), where, for a continuous-time system, the frequency variable has the significance $\lambda = s$, the complex variable in the Laplace-transform, and, for a discrete-time system, the frequency variable has the significance $\lambda = z$, the complex variable in the $\mathcal{Z}$-transform.
The underlying descriptor system has a generalized state-space representation of the form
\be\label{app:dss}
\begin{array}{rcl} E \lambda x(t) &=& Ax(t) + Bu(t) , \\
y(t) &=& Cx(t) + Du(t) ,
\end{array} \ee
where $x(t) \in \mathds{R}^n$ is the state vector, $u(t) \in \mathds{R}^m$ is the input vector, and $y(t) \in \mathds{R}^p$ is the output vector, and where $\lambda$ is either the differential operator $\lambda x(t) = \frac{\text{d}}{\text{d}t}x(t)$ for a continuous-time system or the advance operator
$\lambda x(t) = x(t+1)$ for a discrete-time system. In all what follows, we assume
$E$ is square and possibly singular, and the pencil $A-\lambda E$ is regular (i.e., $\det (A-\lambda E) \not \equiv 0$). The descriptor system  (\ref{app:dss}) represents a state-space realization of the TFM $G(\lambda)$ if
\be\label{GTFM} G(\lambda) = C(\lambda E-A)^{-1}B+D .\ee
We will also use the equivalent notation for the TFM in (\ref{GTFM})
\be\label{GTFMalt} G(\lambda) = \ba{c|c} A-\lambda E & B\\ \hline C & D \ea .\ee
The descriptor system (\ref{app:dss}) will be alternatively denoted by the quadruple $(A-\lambda E,B,C,D)$.

We recall from \cite{Verg79,Verg81} some basic notions related to descriptor system realizations.  A  realization $(A-\lambda E,B,C,D)$ is \emph{minimal} if it is controllable, observable and has no non-dynamic modes. A controllable and observable realization is called \emph{irreducible}. The \emph{poles} of $G(\lambda)$ are related to $\Lambda(A-\lambda E)$, the eigenvalues of the pencil $A-\lambda E$ (also known as the generalized eigenvalues of the pair $(A,E)$).
For a minimal  realization, the finite poles of $G(\lambda)$ are the finite eigenvalues of $A-\lambda E$, while the multiplicities of the infinite poles of $G(\lambda)$ are defined by the multiplicities of the infinite eigenvalues of $A-\lambda E$ minus one. A finite eigenvalue $\lambda_f \in \Lambda(A-\lambda E)$ is controllable if $\rank\,[\,A-\lambda_f E \; B\,] = n$, otherwise is uncontrollable. Similarly, a finite eigenvalue $\lambda_f \in \Lambda(A-\lambda E)$ is observable if $\rank\,[\,A^T-\lambda_f E^T \; C^T\,] = n$, otherwise is unobservable. Infinite controllability requires that $\rank\,[\, E \; B\,] = n$, while infinite observability requires that $\rank\,[\,E^T \; C^T\,] = n$. The lack of non-dynamic modes can be equivalently expressed as $A \mathcal{N}(E) \subseteq \mathcal{R}(E)$, where $\mathcal{N}(E)$ denotes the right nullspace of $E$. The zeros of $G(\lambda)$ are related to the eigenvalues of the system matrix pencil
\be\label{sys-pencil} S(\lambda) = \ba{cc} A-\lambda E & B \\ C & D \ea . \ee
For a minimal  realization, the finite zeros of $G(\lambda)$ are the finite eigenvalues of $S(\lambda)$, while the multiplicities of the infinite zeros of $G(\lambda)$ are defined by the multiplicities of the infinite eigenvalues of $S(\lambda)$ minus one.

Consider a disjunct partition of the complex plane $\mathds{C}$ as
\be\label{Cgoodbad}  \mathds{C} = \mathds{C}_g \cup \mathds{C}_b, \quad \mathds{C}_g \cap \mathds{C}_b = \emptyset \, ,\ee
where $\mathds{C}_g$ and $\mathds{C}_b$ are symmetric with respect to the real axis.
$\mathds{C}_g$ and $\mathds{C}_b$ are usually associated with the ``good'' and ``bad'' domains of the complex plane $\mathds{C}$ for the poles and zeros of $G(\lambda)$.
We say the descriptor system (\ref{GTFMalt}) is \emph{proper} $\mathds{C}_g$-\emph{stable} if all  finite eigenvalues of $A-\lambda E$ belong to $\mathds{C}_g$ and all infinite eigenvalues of $A-\lambda E$ are simple. The descriptor system (\ref{GTFMalt}) (or equivalently the pair $(A-\lambda E,B)$) is $\mathds{C}_b$-\emph{stabilizable} if $\rank \ba{cc} A-\lambda E & B \ea = n$ for all finite $\lambda \in \mathds{C}_b$ and $\rank\,[\, E \; B\,] = n$. The descriptor system (\ref{GTFMalt})   (or equivalently the pair $(A-\lambda E,C)$)  is $\mathds{C}_b$-\emph{detectable} if $\rank \left[\begin{smallmatrix} A-\lambda E \\C \end{smallmatrix}\right] = n$ for all finite $\lambda \in \mathds{C}_b$ and $\rank\,[\,E^T \; C^T\,] = n$.

The following result slightly extends \cite[\textbf{Theorem 2.2}]{Oara05} and is instrumental for the suggested computational approach of proper range space bases.

\begin{lemma} \label{lem:sklf} Let $G(\lambda)$ be a $p\times m$ real rational matrix of normal rank $r$, with a $\mathds{C}_b$-{stabilizable} descriptor system realization  $(A-\lambda E,B,C,D)$  satisfying (\ref{GTFM}). Then, there exist two orthogonal matrices $U$ and $Z$ such that
\be\label{spec-klf}
\ba{cc} U & 0 \\ 0 & I \ea \ba{cc} A - \lambda E & B \\ \hline C & D \ea Z =
\ba{cccc} A_{rg}-\lambda E_{rg} & \ast & \ast & \ast \\
0 & A_{b\ell}-\lambda E_{b\ell} & B_{b\ell} & \ast \\
0 & 0 & 0 & B_n \\ \hline
0 & C_{b\ell} & D_{b\ell} & \ast \ea , \ee
where
\begin{itemize}
\item[(a)] The pencil $A_{rg}-\lambda E_{rg}$ has full row rank for $\lambda \in \mathds{C}_g$ and $E_{rg}$ has full row rank.
\item[(b)]  $E_{b\ell}$ and $B_n$ are invertible, the pencil
\be\label{syspencil} \ba{cc}   A_{b\ell}-\lambda E_{b\ell} & B_{b\ell} \\ C_{b\ell} & D_{b\ell} \ea \ee
has full column rank $n_{b\ell}+r$ in $\mathds{C}_g$ and the pair $(A_{b\ell}-\lambda E_{b\ell}, B_{b\ell})$ is $\mathds{C}_b$-stabilizable.
\end{itemize}
\end{lemma}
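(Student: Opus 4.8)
The plan is to produce (\ref{spec-klf}) constructively, by an orthogonal staircase reduction of the system pencil $S(\lambda)$ in (\ref{sys-pencil}), refining the Kronecker-like reduction of \cite[Theorem 2.2]{Oara05}. Two facts drive the whole argument. First, since $G(\lambda)$ has normal rank $r$, the pencil $S(\lambda)$ has normal rank $n+r$; so, writing the block-column widths of (\ref{spec-klf}) as $c_1,n_{b\ell},r,c_4$ and the state-row heights as $n_{rg},n_{b\ell},c_4$, a dimension count (using that $A_{b\ell}-\lambda E_{b\ell}$ and $B_n$ are square and (\ref{syspencil}) has $n_{b\ell}+r$ columns) forces $c_1=n_{rg}+(m-r)$ and $n_{rg}+n_{b\ell}+c_4=n$. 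Second, only transformations of the restricted shape $\mathrm{diag}(U,I)\,S(\lambda)\,Z$ are admitted: $U$ recombines the $n$ state rows, $Z$ acts on all $n+m$ columns, while the $p$ output rows $[\,C\;D\,]$ are merely column-transformed. This is exactly what makes the leading block column of $[\,C\;D\,]Z$ vanish, i.e. what pins the zero in the bottom-left corner of (\ref{spec-klf}).

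First I would perform the column reduction that exposes $\mathcal{N}([\,C\;D\,])$ together with the right (column) nullspace of $S(\lambda)$. Concretely, I choose the leading $c_1$ columns of $Z$ to span directions annihilated by $[\,C\;D\,]$ along which the state part can be block-triangularized, and simultaneously compress the state rows by $U$; this yields the first block column $[\,A_{rg}-\lambda E_{rg};\,0;\,0;\,0\,]$ of (\ref{spec-klf}), with the output row annihilated on that block. The generalized Schur reordering step of the staircase reduction is then used to split the finite spectrum: the structure that must leave $A_{rg}-\lambda E_{rg}$ of full row rank on the region designated in (a), together with $E_{rg}$ of full row rank, is deflated into the leading block, while the complementary finite structure is retained in the trailing subpencil. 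This establishes (a) and, dually, the full-column-rank property $n_{b\ell}+r$ of (\ref{syspencil}) on $\mathds{C}_g$ asserted in (b).

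Next I would reduce the trailing subpencil further to separate its regular and its infinite/non-dynamic structure. A row/column compression of the part associated with the infinite eigenvalues and the non-dynamic modes produces the square invertible block $B_n$ in the $[\,0\;0\;0\;B_n\,]$ row; deflating it, and using the regularity of $A-\lambda E$ (so that $A\mathcal{N}(E)\subseteq\mathcal{R}(E)$ is respected by the reduction), leaves the square invertible pencil $A_{b\ell}-\lambda E_{b\ell}$ with $E_{b\ell}$ invertible. The surviving leading subpencil is precisely (\ref{syspencil}); the bookkeeping of invertible $B_n$ and $E_{b\ell}$ is the ``properness'' part of the statement and follows from the staircase pattern together with the normal-rank count above.

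The last and, I expect, hardest step is to propagate the $\mathds{C}_b$-stabilizability hypothesis down to the pair $(A_{b\ell}-\lambda E_{b\ell},B_{b\ell})$. Since $(A-\lambda E,B)$ is $\mathds{C}_b$-stabilizable, $[\,A-\lambda E\;\;B\,]$ has full row rank $n$ at every finite $\lambda\in\mathds{C}_b$ and $[\,E\;\;B\,]$ has full row rank; both ranks are invariant under the constant orthogonal multipliers $\mathrm{diag}(U,I)$ and $Z$. I would read these certificates off the block-triangular form and, after removing the $A_{rg}$ rows and the $B_n$ column, descend them to $(A_{b\ell}-\lambda E_{b\ell},B_{b\ell})$. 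The delicate point is that a single pair $(U,Z)$ must simultaneously realize the zero pattern of (\ref{spec-klf}) and all four certificates --- the full row rank of $A_{rg}-\lambda E_{rg}$, the invertibility of $E_{b\ell}$ and $B_n$, the full column rank of (\ref{syspencil}), and the stabilizability of $(A_{b\ell}-\lambda E_{b\ell},B_{b\ell})$ --- so the main work is to verify that the eigenvalue-ordered staircase leaves no uncontrollable $\mathds{C}_b$-mode hidden in the trailing block and that each rank certificate survives every deflation.
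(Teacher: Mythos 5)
The paper itself offers no proof of this lemma: it states it as a slight extension of Theorem~2.2 of \cite{Oara05} and, further down in Section~2, only lists the algorithmic ingredients of the constructive proof given there (rank compressions, Kronecker-like staircase reduction, QZ reordering of the regular part). Your plan follows exactly that constructive route, and several of its observations are sound and worth keeping: the normal-rank count $\rank S(\lambda)=n+r$ (valid because $A-\lambda E$ is regular), the resulting block dimensions, and the remark that the zero block in position $(4,1)$ forces the leading $c_1$ columns of $Z$ to lie in $\mathcal{N}([\,C\;D\,])$ because only column operations act on the output rows.

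As a proof, however, the proposal stops short precisely where the work is. (i) You never construct the $c_1$-dimensional deflating subspace that simultaneously annihilates $[\,C\;D\,]$ and block-triangularizes $[\,A-\lambda E\;\;B\,]$; its existence and dimension are exactly what the staircase reduction of the right Kronecker structure of $S(\lambda)$ delivers, and asserting it is not proving it. (ii) The claim that the full column rank of (\ref{syspencil}) on $\mathds{C}_g$ follows ``dually'' from (a) is not an argument: one must track which elementary divisors and singular blocks of $S(\lambda)$ land in which diagonal block, and in particular where the $\mathds{C}_g$-zeros of $G(\lambda)$ go (they must end up in $A_{rg}-\lambda E_{rg}$, not in (\ref{syspencil})); your phrase ``the region designated in (a)'' avoids committing to this, but a proof cannot. (iii) The last paragraph explicitly defers ``the main work'' --- that no uncontrollable $\mathds{C}_b$-mode hides in $(A_{b\ell}-\lambda E_{b\ell},B_{b\ell})$ and that every rank certificate survives every deflation --- without carrying it out; the correct mechanism is that $\rank[\,A-\lambda E\;\;B\,]$ and $\rank[\,E\;\;B\,]$ are preserved by $\diag(U,I)$ and $Z$, and that deleting the full-row-rank rows of $A_{rg}-\lambda E_{rg}$ and the invertible column block $B_n$ from the transformed pencil leaves these certificates intact for the trailing pair, but this needs to be written down block by block. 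Finally, one outright slip: regularity of $A-\lambda E$ does \emph{not} imply $A\mathcal{N}(E)\subseteq\mathcal{R}(E)$; the absence of non-dynamic modes is an independent property of a realization and is neither assumed nor needed here. In short, the approach is the right one (it is the one the paper cites), but to turn the plan into a proof you would essentially have to reproduce the constructive argument of \cite[Theorem~2.2]{Oara05} and then add the stabilizability-descent verification, which is the only genuinely new content of the lemma relative to that reference.
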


This lemma allows to construct the range of $G(\lambda)$ using the following result.
\begin{theorem}
Let $G(\lambda)$ be a $p\times m$ real rational matrix of normal rank $r$, with the $\mathds{C}_b$-{stabilizable} descriptor system realization $(A-\lambda E,B,C,D)$ satisfying (\ref{GTFM}). Let $U$ and $Z$ be the orthogonal matrices used in Lemma \ref{lem:sklf} to obtain the system matrix pencil in the special Kronecker-like form (\ref{spec-klf}). Then, the range matrix of $G(\lambda)$ which includes the zeros of $G(\lambda)$ in $\mathds{C}_b$ has the proper descriptor system realization
\be\label{range} R(\lambda) = \ba{c|c} A_{b\ell}-\lambda E_{b\ell} & B_{b\ell} \\ \hline C_{b\ell} & D_{b\ell} \ea . \ee
\end{theorem}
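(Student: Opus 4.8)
The plan is to read Lemma~\ref{lem:sklf} as a change of variables on the trajectories of the realization and to extract the image of $G(\lambda)$ directly from the block-triangular structure. Writing the realization equations in the frequency domain as
\[ \ba{cc} A-\lambda E & B \\ C & D \ea \ba{c} x \\ u \ea = \ba{c} 0 \\ y \ea, \]
so that $y=G(\lambda)u$ for the rational vector $x=(\lambda E-A)^{-1}Bu$, I would insert $ZZ^{T}$ and set $w:=Z^{T}\,[\,x^{T}\ u^{T}\,]^{T}$, partitioned as $w=[\,w_1^T\ w_2^T\ w_3^T\ w_4^T\,]^T$ conformably with the four column blocks of (\ref{spec-klf}). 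Premultiplying by $\diag(U,I)$ leaves the right-hand side unchanged and turns the equations into the triangular system governed by the transformed pencil.

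I would then exploit that structure row by row. The third block row reads $B_n w_4=0$, and since $B_n$ is invertible (property (b)) we get $w_4=0$. The second block row gives $(A_{b\ell}-\lambda E_{b\ell})w_2+B_{b\ell}w_3=0$; invertibility of $E_{b\ell}$ makes the pencil $A_{b\ell}-\lambda E_{b\ell}$ regular, so $w_2=(\lambda E_{b\ell}-A_{b\ell})^{-1}B_{b\ell}w_3$. Substituting into the output row, whose $w_4$-term vanishes, yields $y=C_{b\ell}w_2+D_{b\ell}w_3=R(\lambda)w_3$, with $R(\lambda)$ exactly the realization (\ref{range}). This already shows that every output has the form $R(\lambda)w_3$ and, because $E_{b\ell}$ is invertible, that $R(\lambda)$ is proper.

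Next I would prove $\mathcal{R}(G(\lambda))=\mathcal{R}(R(\lambda))$ by two inclusions. For $\mathcal{R}(G)\subseteq\mathcal{R}(R)$, any $y=G(\lambda)u$ comes from a genuine trajectory whose transform satisfies the relations above, hence $y=R(\lambda)w_3$. For the reverse inclusion I would run the construction backwards: given any rational $w_3$, set $w_4=0$ and $w_2=(\lambda E_{b\ell}-A_{b\ell})^{-1}B_{b\ell}w_3$, and then solve the first block row $(A_{rg}-\lambda E_{rg})w_1=-(\ast)\,w_2-(\ast)\,w_3$ for $w_1$. This is the crucial step and the main obstacle: it is solvable over $\mathds{R}(\lambda)$ precisely because property (a) forces $A_{rg}-\lambda E_{rg}$ to have full row rank as a rational matrix, so its columns span the whole target space. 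The resulting $w$ defines, via $[\,x^{T}\ u^{T}\,]^{T}=Zw$, a bona fide trajectory of the original system: since $S(\lambda)[\,x^{T}\ u^{T}\,]^{T}=\diag(U^{T},I)\,\widetilde{S}(\lambda)w=[\,0^{T}\ (R(\lambda)w_3)^{T}\,]^{T}$, regularity of $A-\lambda E$ gives $y=G(\lambda)u$, whence $R(\lambda)w_3=G(\lambda)u\in\mathcal{R}(G)$.

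Finally I would settle rank and zero location. Since $\mathcal{R}(R)=\mathcal{R}(G)$ has dimension equal to the normal rank $r$ of $G$, and $R(\lambda)$ has exactly $r$ columns, those columns are independent over $\mathds{R}(\lambda)$, so $R(\lambda)$ has full column rank $r$ (equivalently, the Schur-complement identity turns the full-column-rank statement for the pencil (\ref{syspencil}) in property (b) into normal rank $r$ for $R$). The same full-column-rank-in-$\mathds{C}_g$ property shows $R(\lambda)$ has no finite zeros in $\mathds{C}_g$, so all its zeros lie in $\mathds{C}_b$; and because the complementary factor extracted from the $rg$-block has full row rank with its zeros confined to $\mathds{C}_g$, no $\mathds{C}_b$-zero of $G(\lambda)$ can be cancelled, so $R(\lambda)$ indeed carries every zero of $G(\lambda)$ in $\mathds{C}_b$. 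I expect the genuine difficulty to be concentrated in the solvability step via property (a); the remaining items are bookkeeping on ranks and pencil structure.
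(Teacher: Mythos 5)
Your proof is correct, and it takes a genuinely different route from the paper's. The paper reduces the theorem to the compatibility condition $\rank [\, R(\lambda) \; G(\lambda)\,] = \rank R(\lambda) = r$, builds the system pencil of a compound realization of $[\, R(\lambda) \; G(\lambda)\,]$, and verifies that its normal rank equals $n_{b\ell}+n+r$ by a sequence of block row and column eliminations after applying $\widetilde U$ and $\widetilde Z$. You instead read (\ref{spec-klf}) as a change of coordinates on the frequency-domain trajectories: the triangular structure forces $w_4=0$ and $w_2=(\lambda E_{b\ell}-A_{b\ell})^{-1}B_{b\ell}w_3$, hence $y=R(\lambda)w_3$, which gives $\mathcal{R}(G(\lambda))\subseteq\mathcal{R}(R(\lambda))$ and in fact exhibits the cofactor explicitly as $X(\lambda)=[\,0\;\;0\;\;I_r\;\;0\,]Z^T\left[\begin{smallmatrix}(\lambda E-A)^{-1}B\\ I_m\end{smallmatrix}\right]$ --- precisely the formula (\ref{full-row-rank}) (with $F=0$, $W=I$) that the paper only states later, in Section 3.1, by citation. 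Your reverse inclusion, resting on the full row \emph{normal} rank of $A_{rg}-\lambda E_{rg}$ (guaranteed by property (a), or already by $E_{rg}$ having full row rank), is sound, and together with the forward inclusion it delivers the full column rank of $R(\lambda)$ without the Schur-complement reading of property (b); strictly speaking only the forward inclusion and the full-column-rank statement are needed for the factorization, so you prove slightly more than necessary, but that is what makes your rank argument self-contained where the paper simply asserts full column rank ``by construction.'' The one point you treat no more rigorously than the paper does is the claim that $R(\lambda)$ absorbs \emph{every} $\mathds{C}_b$-zero of $G(\lambda)$: you infer it from the zero locations of the complementary factor, which you have not established, while the paper likewise defers this to the construction behind Lemma \ref{lem:sklf}; neither argument closes that gap in detail, so this is not a defect relative to the paper's own proof.
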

\begin{proof} Since, by construction, $R(\lambda)$ has full column rank and contains all zeros of $G(\lambda)$ in $\mathds{C}_b$, we have only to show that
there exists $X(\lambda)$ which satisfies the linear rational matrix equation (\ref{full-rank-fac}). This comes down to show that the compatibility condition
\be\label{rank-comp} \rank R(\lambda) = \rank [\, R(\lambda) \; G(\lambda) \,] = r \ee
is fulfilled.
A descriptor system realization of  $[\, R(\lambda) \; G(\lambda) \,]$ is
\[  [\, R(\lambda) \; G(\lambda) \,] = \ba{cc|cc} A_{b\ell}-\lambda E_{b\ell} & 0 & B_{b\ell} & 0 \\
0 & A-\lambda E & 0 & B \\ \hline
C_{b\ell} & C & D_{b\ell} & D \ea  \]
and the rank condition (\ref{rank-comp}) is equivalent to
\be\label{rank-cond} \rank \ba{cccc} A_{b\ell}-\lambda E_{b\ell} & B_{b\ell} & 0 &0 \\
0 & 0 & A-\lambda E & B \\
C_{b\ell} &  D_{b\ell} & C & D \ea = n_{b\ell}+n+r . \ee
By premultiplying the pencil
\[ S(\lambda) := \ba{cccc} A_{b\ell}-\lambda E_{b\ell} & B_{b\ell} & 0 &0 \\
0 & 0 & A-\lambda E & B \\
C_{b\ell} &  D_{b\ell} & C & D \ea \]
with $\widetilde U = \diag (I_{n_{b\ell}},U,I)$ and postmultiplying it with $\widetilde Z = \diag (I_{n_{b\ell}+r},Z)$ we obtain
\[ \widetilde S(\lambda) := \widetilde U S(\lambda)\widetilde Z = \ba{cccccc}
A_{b\ell}-\lambda E_{b\ell} & B_{b\ell} & 0 &0 & 0 & 0\\
0 & 0 & A_{rg}-\lambda E_{rg} & \ast & \ast & \ast \\
0 & 0 & 0 & A_{b\ell}-\lambda E_{b\ell} & B_{b\ell} & \ast \\
0 & 0 & 0 & 0 & 0 & B_n \\
C_{b\ell} &  D_{b\ell} & 0 & C_{b\ell} & D_{b\ell} & \ast \ea \]
To prove (\ref{rank-cond}), we show that $\rank \widetilde S(\lambda) = n_{b\ell}+n+r$, by performing successive block row and block column operations which preserve its rank. The first three block operations are:
\begin{itemize}
\item[1)] subtract the first block column multiplied from right with $(A_{b\ell}-\lambda E_{b\ell})^{-1}B_{b\ell}$ from the second  block column;
\item[2)] subtract the resulting first block row multiplied from left with $C_{b\ell}(A_{b\ell}-\lambda E_{b\ell})^{-1}$ from the last block row;
    \item[3)] subtract the third block row multiplied from left with $C_{b\ell}(A_{b\ell}-\lambda E_{b\ell})^{-1}$ from the last block row.
\end{itemize}
After performing these operations, we obtain
\[ \rank \widetilde S(\lambda) = \rank \ba{cccccc}
A_{b\ell}-\lambda E_{b\ell} & 0 & 0 &0 & 0 & 0\\
0 & 0 & A_{rg}-\lambda E_{rg} & \ast & \ast & \ast \\
0 & 0 & 0 & A_{b\ell}-\lambda E_{b\ell} & B_{b\ell} & \ast \\
0 & 0 & 0 & 0 & 0 & B_n \\
0 &  R(\lambda) & 0 & 0 & R(\lambda) & D_n(\lambda) \ea
 \]
where $R(\lambda) = C_{b\ell} (\lambda E_{b\ell}-A_{b\ell})^{-1}B_{b\ell} +D_{b\ell}$ and $D_n(\lambda)$ denotes the resulting rational matrix in the last block of the last block row. We continue the reduction of the resulted rational matrix by performing two additional operations:
\begin{itemize}
\item[4)]  subtract the fourth block row multiplied from left with $B_n^{-1}D_n(\lambda)$ from the last block row;
\item[5)]  subtract the second block column from the fifth block column.
\end{itemize}
We finally obtain
\[ \rank \widetilde S(\lambda) = \rank \ba{c|c|cccc}
A_{b\ell}-\lambda E_{b\ell} & 0 & 0 &0 & 0 & 0\\ \hline
0 & 0 & A_{rg}-\lambda E_{rg} & \ast & \ast & \ast \\
0 & 0 & 0 & A_{b\ell}-\lambda E_{b\ell} & B_{b\ell} & \ast \\
0 & 0 & 0 & 0 & 0 & B_n \\ \hline
0 &  R(\lambda) & 0 & 0 & 0 & 0 \ea ,
 \]
from which we immediately have the desired result by observing that
\[ \begin{array}{lll}\rank \widetilde S(\lambda)& = & \rank (A_{b\ell}-\lambda E_{b\ell})+ \rank R(\lambda) + \rank {\arraycolsep=.8mm\ba{cccc}A_{rg}-\lambda E_{rg} & \ast & \ast & \ast \\
0 & A_{b\ell}-\lambda E_{b\ell} & B_{b\ell} & \ast \\
0 & 0 & 0 & B_n \\ \ea} \\
&=& n_{b\ell} + r + n \, . \end{array}
\]
\end{proof}

For the computation of the descriptor realization (\ref{range}) of the range $R(\lambda)$, a numerically stable algorithm can be devised, which exclusively uses orthogonal transformations to reduce the system matrix pencil to the special form (\ref{spec-klf}). The main steps of such an algorithm are given in the (constructive) proof of \textbf{Theorem 2.2} in \cite{Oara05}. The basic ingredients of such an algorithm are: (a) column and row compressions to full column rank or full row rank matrices, respectively, performed via QR-factorizations with column pivoting, or, more reliably, using singular value decompositions; (b) reduction of a linear pencil to a Kronecker-like staircase form using orthogonal similarity transformations, such that the right, regular and left Kronecker structures are separated; (c) reordering of the eigenvalues of the regular part using orthogonal similarity transformations via the QZ-algorithm. Suitable computational algorithms are described in \cite{Golu13} for (a) and (c), and in \cite{Oara97} for (b) (see also \cite[Chapter 10]{Varg17} for an overview of these techniques).

With an additional similarity transformation of the form
\[ Q = \diag \left(I, \ba{cc} I_{n_{b\ell}} & 0 \\ F & I_r \ea, I\right) \]
we achieve
\be\label{spec-klf2}
\ba{cc} U & 0 \\ 0 & I \ea \ba{cc} A - \lambda E & B \\ \hline C & D \ea ZQ =
\ba{cccc} A_{rg}-\lambda E_{rg} & \ast & \ast & \ast \\
0 & A_{b\ell}+B_{b\ell}F-\lambda E_{b\ell} & B_{b\ell} & \ast \\
0 & 0 & 0 & B_n \\ \hline
0 & C_{b\ell}+D_{b\ell}F & D_{b\ell} & \ast \ea . \ee
It follows that, with an arbitrary invertible $W$,
\be\label{range_stab} R(\lambda) = \ba{c|c} A_{b\ell}+B_{b\ell}F-\lambda E_{b\ell} & B_{b\ell}W \\ \hline C_{b\ell}+D_{b\ell}F & D_{b\ell}W \ea  \ee
is also a proper range of $G(\lambda)$. Since the pair $(A_{b\ell}-\lambda E_{b\ell}, B_{b\ell})$ is $\mathds{C}_b$-stabilizable, we can choose $F$ such that all eigenvalues of $A_{b\ell}-\lambda E_{b\ell}$ lying in $\mathds{C}_b$, can be moved to arbitrary locations in an appropriate stability domain of $\mathds{C}$.

\section{Some applications}
\subsection{Full rank factorizations} A full rank factorization of an arbitrary rational matrix $G(\lambda)$ of normal rank $r$ of the form (\ref{full-rank-fac}) can be determined with a proper $R(\lambda)$ of the form (\ref{range_stab}) and $X(\lambda)$ of the form
\be\label{full-row-rank} X(\lambda) = \ba{c|c} A-\lambda E & B \\ \hline \\[-3.5mm] \widetilde C & \widetilde D \ea , \ee
where $[\,\widetilde C \; \widetilde D\, ] = W^{-1}[\, 0 \;\; -F \;\; I_r \;\; 0 \,]Z^T$. The expression of $X(\lambda)$ can be verified by explicitly computing the descriptor realization of $R(\lambda)X(\lambda)$ (see the proof of \textbf{Theorem 3.1} in \cite{Oara05}).

A dual full rank factorization of $G(\lambda)$ is
\be\label{dual_frf} G(\lambda) = \widetilde X(\lambda) \widetilde R(\lambda) , \ee
where $\widetilde R(\lambda)$ is a full row rank \emph{coimage} of $G(\lambda)$ (i.e., $\widetilde R^T(\lambda)$ is the range of $G^T(\lambda)$) and $\widetilde X(\lambda)$ is a full column rank rational matrix. The dual factorization (\ref{dual_frf}) can be computed, by determining the full rank factorization of $G^T(\lambda)$, in the form  $G^T(\lambda) =  \widetilde R^T(\lambda) \widetilde X^T(\lambda)$.

\subsection{Minimum proper bases of the range space} The columns of the range matrix $R(\lambda)$ form a rational basis of $\mathcal{R}(G(\lambda))$. This basis is called \emph{minima}l if the McMillan degree of $R(\lambda)$ (i.e., the number of poles of $R(\lambda)$) is the least achievable one. To determine a minimal proper basis, we choose $\mathds{C}_g = \mathds{C} \cup \{\infty\}$ and
$\mathds{C}_b = \emptyset$, in which case, $R(\lambda)$ has no zeros. A stable minimal basis can be constructed in the form (\ref{range_stab}). A minimal inner basis, satisfying $R^\sim(\lambda)R(\lambda) = I_r$, can be computed for a suitable choice of $F$ and $W$ in (\ref{range_stab}) (see \cite{Oara00} for the continuous-time case, and \cite{Oara05} for the discrete-time case).\footnote{For a TFM $G(s)$ of a continuous-time system, the conjugate (or adjoint) is defined as $G^\sim(s) := G^T(-s)$, while  for the TFM $G(z)$ of a discrete-time system $G^\sim(z) := G^T(1/z)$.}

\subsection{Normalized coprime factorizations} A straightforward application of the minimal inner range computation is the determination of a \emph{normalized right coprime factorization} of an arbitrary $p\times m$ rational matrix $G(\lambda)$ as
\be\label{giofac:nrcf} G(\lambda) = N(\lambda)M^{-1}(\lambda), \ee
such that $N(\lambda)$ and $M(\lambda)$  are stable and $\left[\begin{smallmatrix} N(\lambda) \\ M(\lambda) \end{smallmatrix}\right]$ is inner (i.e., $N^\sim(\lambda)N(\lambda) + M^\sim(\lambda)M(\lambda) = I$).
The factors $N(\lambda)$ and $M(\lambda)$ can be computed from a minimal inner basis $R(\lambda)$ of the range of $\left[ \begin{smallmatrix} G(\lambda) \\ I_m \end{smallmatrix}\right]$ satisfying
\[ \ba{c} G(\lambda) \\ I_m \ea = R(\lambda)X(\lambda) , \]
with
\[ R(\lambda) = \ba{c} N(\lambda) \\ M(\lambda) \ea, \qquad X(\lambda) = M^{-1}(\lambda) . \]

\subsection{Moore-Penrose pseudo-inverse} Another straightforward application of inner minimal bases of range spaces is the computation of the  Moore-Penrose pseudo-inverse $G^\#(\lambda)$ of a rational matrix $G(\lambda)$. This computation  can be performed in three steps, using a simplified version of the approach described in \cite{Oara00}:
\begin{enumerate}
\item Compute a  full-rank factorization
\[ G(\lambda) =  U(\lambda)G_1(\lambda), \]
with $U(\lambda)$, a minimal inner range matrix, and $G_1(\lambda)$ full row rank.
\item Compute the dual full-rank factorization
\[ G_1(\lambda) = G_2(\lambda)V(\lambda), \]
with $V(\lambda)$, a minimal co-inner coimage (i.e., $V(\lambda)V^\sim(\lambda) = I$), and $G_2(\lambda)$ invertible.
\item Compute
\[ G^\#(\lambda) = V^\sim(\lambda) G_2^{-1}(\lambda) U^\sim(\lambda). \]
\end{enumerate}
The dual full-rank factorization at Step 2 can be simply determined by computing the full-rank factorization $G_1^T(\lambda) = V^T(\lambda)G_2^T(\lambda)$, with $V^T(\lambda)$, minimal inner range matrix.

\subsection{Inner-outer factorization}  Let $\mathds{C}_g$ be the appropriate ``stability'' domain representing the closed left half complex plane, including infinity, for a continuous-time system, or the closed unit disc centered in the origin, for a discrete-time system, and define $\mathds{C}_b$ as its complement $\mathds{C}_b = \mathds{C}\setminus\mathds{C}_g$.
The generalized inner--quasi-outer factorization of a $p\times m$ rational matrix $G(\lambda)$, with normal rank $r$, is a special full rank factorization
\be\label{iofac} G(\lambda) = G_i(\lambda)G_o(\lambda) ,\ee
where $G_i(\lambda)$ is a $p\times r$ stable \emph{inner }factor (i.e., $G_i^\sim(\lambda)G_i(\lambda) = I_r$) and $G_o(\lambda)$ is \emph{quasi-outer}, having full row rank and only zeros in $\mathds{C}_g$.

To compute the factorization (\ref{iofac}), we can determine the range matrix $R(\lambda)$ of $G(\lambda)$  in (\ref{range_stab}), by choosing $F$ and $W$ such that $R(\lambda)$ is inner (see \cite{Oara00} for the continuous-time case, and \cite{Oara05} for the discrete-time case). Once the inner factor $G_i(\lambda) := R(\lambda)$ is determined, the quasi-outer factor results as $G_o(\lambda) := X(\lambda)$, where $X(\lambda)$ has the form (\ref{full-row-rank}).

\section{Examples}
The described range computation approach has been  implemented as a MATLAB function \texttt{\bfseries grange}, which belongs to the free software collection of \emph{Descriptor Systems Tools} (\textbf{DSTOOLS}) \cite{Varg17a}. This function also allows to compute inner range bases, including minimal inner range bases. For the computation of the special Kronecker-like form (\ref{spec-klf}) of a system matrix pencil, the function \texttt{\bfseries gsklf} has been implemented, which allows several choices of $\mathds{C}_b$. For the computation of the involved Kronecker-like form, the function \texttt{\bfseries gklf} is available, which is based on the Algorithm 3.2.1 of \cite{Beel88}. This algorithm underlies the implementation available in the SLICOT library \cite{Benn99}, which served as basis for the mex-function \url{sl_klf}, which has been used to implement \texttt{\bfseries gklf}.

\subsection*{Example 1} This is Example 1 from \cite{Oara00} of the transfer function matrix of a continuous-time proper system:
\be\label{giofac:ex1} G(s) = {\def\arraystretch{2}
\left[\begin{array}{ccc} \displaystyle\frac{s - 1}{s + 2} & \displaystyle\frac{s}{s + 2} & \displaystyle\frac{1}{s + 2}\\ 0 & \displaystyle\frac{s - 2}{{\left(s + 1\right)}^2} & \displaystyle\frac{s - 2}{{\left(s + 1\right)}^2}\\ \displaystyle\frac{s - 1}{s + 2} & \displaystyle\frac{s^2 + 2\, s - 2}{\left(s + 1\right)\, \left(s + 2\right)} & \displaystyle\frac{2\, s - 1}{\left(s + 1\right)\, \left(s + 2\right)} \end{array}\right]} \, .
\ee
$G(s)$ has zeros at $\{1, 2, \infty\}$, poles at $\{-1, -1, -2, -2 \}$, and normal rank $r = 2$.

A minimum proper basis of the range of $G(s)$, computed with \texttt{\bfseries grange}, is
\[ R(s) = \frac{1}{s + 1.374}
\ba{rr}
1.552 s + 2.124 & 1.314 s + 1.817\\
0.5931 s + 1.186 & -0.758 s - 1.516 \\
2.145 s + 2.717 & 0.5558 s + 1.059
\ea , \]
has McMillan-degree 1 and no zeros. The full row rank factor $X(s)$, satisfying $G(s) = R(s)X(s)$, has McMillan degree 4, and zeros at $\{1, 2, -1.374, \infty\}$. The zero at  $-1.374$ is equal to the pole of $R(s)$.

If we include in the computed range $R(s)$, both finite unstable zeros of $G(s)$, then
$R(s)$ has precisely only these (unstable) zeros at $\{1,2\}$ and has McMillan degree 3, with poles at $\{ -1\pm 0.433\iu, -2\}$. If we determine an inner range $R(s)$, then the unstable zeros of $G(s)$ are reflected to symmetric positions in $\{-1,-2\}$ as zeros of $R(s)$ and the poles of $R(s)$ are at
$\{ -1, -1.732, -2 \}$. Since $R(s)$ is the inner factor of an inner--quasi-outer factorization of $G(s)$, it follows that the full row rank factor $X(s)$, satisfying $G(s) = R(s)X(s)$, is the quasi-outer factor. For reference purposes, we give the resulting inner factor
\[ R(s) = {\arraycolsep=1mm\def\arraystretch{2}
\ba{rr}
\displaystyle\frac{-0.6078 s^3 - 1.944 s^2 - 1.501 s + 0.6181}{(s+2) (s+1.732) (s+1)} &
\displaystyle\frac{0.5452 s^3 - 0.2354 s^2 - 2.743 s - 2.507}{(s+2) (s+1.732) (s+1)} \\
\displaystyle\frac{-0.1683 s^3 - 0.6903 s^2 + 0.673 s + 2.761}{(s+2) (s+1.732) (s+1)} &
\displaystyle\frac{-0.799 s^3 - 0.4361 s^2 + 3.196 s + 1.744}{(s+2) (s+1.732) (s+1)} \\
\displaystyle\frac{-0.7761 s^3 - 2.466 s^2 - 0.474 s + 1.999}{(s+2) (s+1.732) (s+1)}  &
\displaystyle\frac{-0.2538 s^3 + 0.1274 s^2 - 0.7092 s - 1.635}{(s+2) (s+1.732) (s+1)}
\ea } . \]

\subsection*{Example 2} This is Example 2 from \cite{Oara05} of the transfer function matrix of a discrete-time polynomial system:
\be\label{glsfg:ex1}  G(z) = {\left[\begin{array}{ccc} z^2 + z + 1 & 4\, z^2 + 3\, z + 2 & 2\, z^2 - 2\\ z & 4\, z - 1 & 2\, z - 2\\ z^2 & 4\, z^2 - z & 2\, z^2 - 2\, z \end{array}\right]
},
\ee
which has two infinite poles (i.e., McMillan-degree of $G(z)$ is equal to 2), a zero at 1, and has a minimal descriptor state-space realization of order 4.

A minimum proper basis of the range of $G(z)$, computed with \texttt{\bfseries grange}, is
\[ R(z) = \frac{1}{z + 0.3304}
\ba{cc}
-1.564 z - 0.8277 & 0.06277 z + 0.4338\\
-0.9414  &  1.25 \\
-0.9414 z & 1.25 z
\ea , \]
has McMillan-degree 1 and no zeros. The full row rank factor $X(z)$, satisfying $G(z) = R(z)X(z)$, has McMillan degree 2, and zeros at $\{-0.3304, 1\}$. Notice that the zero at  $-0.3304$ is equal to the pole of $R(z)$.

An inner range $R(z)$ results as
\[ R(z) =
{\def\arraystretch{2}\ba{rr}
-0.7614 & -0.6483\\
\displaystyle\frac{-0.4584}{z}  &  \displaystyle\frac{0.5384}{z} \\
-0.4584 & 0.5384
\ea }, \]
has McMillan-degree 1 and no zeros. The quasi-outer factor $X(z)$ results as
\[ X(z) = {\arraycolsep=1mm\ba{rrr}
-1.678 z^2 \!-\! 0.7614 z \!-\! 0.7614 & -6.713 z^2 \!-\! 1.367 z \!-\! 1.523 & -3.356 z^2\!+\!1.834 z\!+\! 1.523 \\
0.4285 z^2 \!-\! 0.6483 z \!-\! 0.6483 & 1.714 z^2 \!-\! 3.022 z \!-\! 1.297 & 0.8571 z^2 \!-\! 2.154 z\!+\! 1.297
\ea },\]
has McMillan degree 2 and zeros at $\{ 0, 1 \}$.

\section{Conclusions}
\label{sec:conclusions}

In this note we described a numerically reliable general approach to compute proper bases for the range space of a rational matrix  and, simultaneously, to produce a  complete full rank factorization of this matrix.  The underlying computational algorithms use descriptor system state-space realizations, for which, the only restriction is a certain stabilizability condition (always fulfilled when using irreducible realizations). The  techniques described in this note served for the implementation of robust computational software, which is part of \textbf{DSTOOLS}, a free collection of descriptor systems tools for MATLAB \cite{Varg17a}.

\newpage

\bibliographystyle{siamplain}

\end{document}